\newcommand{\be}{\begin{equation}}
\newcommand{\en}{\end{equation}}
\newcommand{\bea}{\begin{eqnarray}}
\newcommand{\ena}{\end{eqnarray}}
\newcommand{\beano}{\begin{eqnarray*}}
\newcommand{\enano}{\end{eqnarray*}}
\newcommand{\bee}{\begin{enumerate}}
\newcommand{\ene}{\end{enumerate}}
\newcommand{\R}{\mathbb{R}}
\newcommand{\mc}{\mathcal}
\newcommand{\D}{{\mc D}}
\newcommand{\Sc}{{\cal S}}
\newcommand{\F}{{\cal F}}
\newcommand{\Lc}{{\cal L}}
\newcommand{\1}{1 \!\! 1}
\newcommand{\ii}{\mathrm{i}}
\newcommand{\ud}{\mathrm{d}}
\newcommand{\Hil}{\mc H}
\newtheorem{thm}{Theorem}
\newtheorem{prop}[thm]{Proposition}
\newtheorem{defn}[thm]{Definition}
\newenvironment{proof}{\noindent {\bf Proof --}}{\hfill$\square$ \vspace{3mm}\endtrivlist}
\begin{document}

\thispagestyle{empty}

\vspace*{1cm}

\begin{center}
{\Large \bf Generalized Heisenberg algebra and (non linear) pseudo-bosons}   \vspace{1.5cm}\\

{\large F. Bagarello}\\
 DEIM -Dipartimento di Energia, ingegneria dell' Informazione e modelli Matematici,
\\ Scuola Politecnica, Universit\`a di Palermo, I-90128  Palermo, Italy\\
and\\ INFN, Sezione di Napoli.\\
e-mail: fabio.bagarello@unipa.it

\vspace*{.5cm}

{\large E. M. F. Curado}\\
Centro Brasileiro de Pesquisas Fisicas
and  Instituto Nacional de Ci\^encia e Tecnologia - Sistemas Complexos\\
Rua Xavier Sigaud 150, 22290-180 - Rio de Janeiro, RJ, Brazil\\
e-mail: evaldo@cbpf.br

 \vspace*{.5cm}

{\large J. P. Gazeau}\\
APC, UMR 7164, Univ Paris  Diderot, Sorbonne Paris Cit\'e,
75205 Paris, France\\
and\\
Centro Brasileiro de Pesquisas Fisicas\\
Rua Xavier Sigaud 150, 22290-180 - Rio de Janeiro, RJ, Brazil\\
e-mail: gazeau@apc.in2p3.fr

\end{center}

\vspace{3mm}

\begin{abstract}
\noindent We propose a deformed version of the generalized Heisenberg algebra by using techniques borrowed from the theory of pseudo-bosons. In particular, this analysis is relevant when non self-adjoint Hamiltonians are needed to describe a given physical system. We also discuss relations with nonlinear pseudo-bosons. Several examples are discussed.

\end{abstract}



\vfill


\newpage

\section{Introduction and preliminary results}\label{sectI}

The problem of finding exactly solvable quantum systems has always raised a lot of interest in the community of physicists and mathematicians (see for instance \cite{carinena14} and references therein). The reason is obvious: from a physical side, a quantum system is meant to describe some (relevant) phenomenon, mainly at microscopic scale. Solving the Schr\"odinger equation with a potential not considered so far can represent a serious mathematical challenge.
With this in mind, along the years several general ideas have been proposed to produce new solvable models out of old ones. For instance, if one knows the eigenvalues and  eigenvectors of a given Hamiltonian $H$, with $H=H^\dagger$, one can construct new related Hamiltonians related to $H$ by some similarity operator, or by intertwining operators, or, if $H$ can be Darboux factorized, by considering its super-symmetric partner. In each of these ways these new Hamiltonians have spectra and eigenvectors which can be deduced from the ones of $H$. It should also be stressed that, in recent years, and in particular since the publication of the famous paper by Bender and Boettcher in 1998, \cite{ben1}, similar strategies have been extended also to the case in which the original Hamiltonian $H$ is not necessarily self-adjoint, $H\neq H^\dagger$.

In recent years an alternative strategy to construct a solvable model working with self-adjoint Hamiltonians has been proposed by one of us (E.C.) and his coworkers in a series of papers on what has been called {\em generalized Heisenberg algebras} (GHA), \cite{curado,curado1,curado2}. This strategy is mainly based on the existence of suitable intertwining and commutation relations, and on the existence of a certain function related to them. On a different side, deformations of the canonical (anti-)commutation relations have proved to be quite useful for deducing eigenvalues and eigenvectors of certain Hamiltonians appearing in the literature  devoted to non-Hermitian quantum mechanics. We refer to \cite{baginbagbook} for a recent review. Here we somehow merge the two approaches showing how a GHA can be deformed to include in the framework Hamiltonians which are manifestly non self-adjoint, and how eigenvalues and biorthogonal eigenvectors can be constructed explicitly.

The paper is organized as follows: in the rest of this section we briefly discuss some essential aspects of the GHA. In Section \ref{sectII} we propose our deformed version of this algebra. We present in Section \ref{sectIII} several examples. Those   devoted to the P\"oschl-Teller potential and its limit case which is the infinite square-well potential are examined in detail. In Section \ref{sectIV} we discuss the relation of the deformed GHA with the so-called non linear $\D$-pseudo bosons, \cite{fbnlpbs}, while Section \ref{sectconcl} contains our conclusions. To keep the paper self-contained, the Appendix contains some useful formulas on the Gegenbauer polynomials and definitions on biorthogonal sets used in the paper.

\subsection{An introduction to GHA}\label{sectAITGHA}

Let  $c$ and $H=H^\dagger$ be two operators satisfying the following conditions:
\be
cH=f(H)c,\qquad\mbox{and}\qquad [c,c^\dagger]=f(H)-H,
\label{11}\en
where $f(x)$ is a given function of $x$, known as {\em the characteristic function} of the GHA. It has to be stressed that both these equalities should be properly defined, at least if $c$ or $H$, or both, are unbounded operators. This is the case, for instance, in the simple case when $H=c^\dagger c$, with $[c,c^\dagger]=\1$. When this is so, formulas (\ref{11}) should be understood in the sense of unbounded operators. For instance, following \cite{baginbagbook}, we can assume that a set $\D$ exists, dense in the Hilbert space $\Hil$, stable under the action of $c$, $c^\dagger$, $H$ and $f(H)$ such that
$$
cH\varphi=f(H)c\,\varphi, \qquad \mbox{and} \qquad (c\,c^\dagger-c^\dagger\,c)\varphi=(f(H)-H)\varphi,
 $$
for all $\varphi\in\D$. Of course, $f(H)$ can be naturally defined using functional calculus, since $H=H^\dagger$, \cite{rs}. From now on, when no confusion can arise, we will use the simpler notation in (\ref{11}).

From (\ref{11}) we easily deduce that $[c,H]=(f(H)-H)c$, and $[c^\dagger,H]=-c^\dagger(f(H)-H)$. Let us suppose $H$ positive,
and having  $\hat e_0 \in \mathcal{H}$ as a normalized ground-state of $H$, $H \hat e_0=\epsilon_0 \, \hat e_0$.
For concreteness we assume $\epsilon_0>0$. If  $\hat e_0$ belongs to $\D$, we can introduce the vectors $\hat e_n=(c^\dagger)^n\hat e_0$, $n\geq0$, and they all belong to $\D$ as well, and they are all eigenstates of $H$ with eigenvalues $\epsilon_n$, defined recursively as $\epsilon_n=f(\epsilon_{n-1})$, $n\geq1$. In other words, we have
\be
H\hat e_n=\epsilon_n\hat e_n,
\label{12}\en
$n\geq0$, with $\epsilon_n$ as above. Of course, self-adjointness of $H$ implies that, if the multiplicity of each $\epsilon_n$ is one, different $\hat e_n$ are mutually orthogonal: $\left<\hat e_n,\hat e_k\right>=0$ if $n\neq k$. However they are, in general, not normalized. To fix the normalization it is useful to check first if $c$ is a lowering operator, i.e. if $c\hat e_n$ is proportional to $\hat e_{n-1}$. For that it is necessary to check first that $c\hat e_0=0$.  This may follow from an explicit computation, but it may also be deduced using the following general result:

\begin{prop}\label{prop0}
Let $\F_{\hat e}=\{\hat e_n,\,n\geq0\}$ be complete in $\Hil$. Then $c\, \hat e_0=0$.
\end{prop}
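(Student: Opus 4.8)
The plan is to exploit completeness of $\F_{\hat e}$ together with the orthogonality of eigenvectors of the self-adjoint operator $H$, rather than the algebraic relations (\ref{11}) directly. Since $\{\hat e_n\}$ is complete in $\Hil$, a vector $\psi\in\Hil$ vanishes if and only if $\left<\hat e_n,\psi\right>=0$ for every $n\geq0$. Applying this criterion to $\psi=c\,\hat e_0$, it suffices to show that $\left<\hat e_n,c\,\hat e_0\right>=0$ for all $n$.

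First I would move $c$ across the inner product. Because all the $\hat e_n$ belong to $\D$, which is stable under $c$ and $c^\dagger$, the adjoint relation is available on this domain and gives
\be
\left<\hat e_n,c\,\hat e_0\right>=\left<c^\dagger\hat e_n,\hat e_0\right>=\left<(c^\dagger)^{n+1}\hat e_0,\hat e_0\right>=\left<\hat e_{n+1},\hat e_0\right>,
\en
where in the last step I used the very definition $\hat e_{n+1}=(c^\dagger)^{n+1}\hat e_0$. Thus the question reduces entirely to whether $\hat e_{n+1}$ is orthogonal to $\hat e_0$.

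Now I would invoke the self-adjointness of $H$. Both $\hat e_{n+1}$ and $\hat e_0$ are eigenvectors of $H$, with eigenvalues $\epsilon_{n+1}$ and $\epsilon_0$ respectively, as recorded in (\ref{12}). Under the standing nondegeneracy hypothesis (each $\epsilon_n$ has multiplicity one, so the $\epsilon_n$ are pairwise distinct), we have $\epsilon_{n+1}\neq\epsilon_0$ for every $n\geq0$, and eigenvectors of a self-adjoint operator attached to distinct eigenvalues are orthogonal. Hence $\left<\hat e_{n+1},\hat e_0\right>=0$, so $\left<\hat e_n,c\,\hat e_0\right>=0$ for all $n\geq0$, and completeness of $\F_{\hat e}$ forces $c\,\hat e_0=0$.

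I expect the only delicate point to be the justification that $\epsilon_{n+1}\neq\epsilon_0$, i.e. that $\hat e_0$ is a genuine ground state whose eigenvalue is not reproduced higher up the ladder; this is exactly what the multiplicity-one assumption guarantees (equivalently, that $f$ pushes the spectrum strictly upward, so that $\epsilon_n$ is strictly increasing in $n$). Everything else is a routine use of the adjoint identity and of the elementary orthogonality of eigenvectors of $H=H^\dagger$, so the argument does not even require the commutation relation $[c,c^\dagger]=f(H)-H$.
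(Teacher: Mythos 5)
Your proof is correct and follows essentially the same route as the paper's: pair $c\,\hat e_0$ against each $\hat e_n$, move $c$ across the inner product to get $\left<\hat e_{n+1},\hat e_0\right>$, invoke orthogonality of the $\hat e_n$, and conclude by completeness. The only difference is cosmetic — you re-derive the orthogonality from self-adjointness of $H$ and the multiplicity-one assumption, whereas the paper simply cites this same fact, which it had recorded in the paragraph preceding the proposition.
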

\begin{proof}
Using the orthogonality of the vectors $\hat e_n$ and their definition we have
$$
\left<\hat e_n,(c\hat e_0)\right>=\left<c^\dagger\hat e_n,\hat e_0\right>=\left<\hat e_{n+1},\hat e_0\right>=0
$$
for all $n\geq0$. Then the result follows from our assumption on $\F_{\hat e}$.

\end{proof}

Using now induction on $n$ we can prove that, if $c\, \hat e_0=0$, then, for $n\geq1$,
\be
c\, \hat e_n=\sqrt{\epsilon_n-\epsilon_0} \,\hat e_{n-1}, \qquad \mbox{ and }\qquad  c^\dagger \,c\, \hat e_n=(\epsilon_n-\epsilon_0)\hat e_{n}.
\label{13}\en
The proof is easy and is based on (\ref{11}), and will not be given here. Rather than this we comment that, in order for the framework to make sense, the function $f(x)$ must be such that
\be
\epsilon_n>\epsilon_0,
\label{14}\en
for all $n>0$. This is because, from the second equality in (\ref{13}), we have
$$
\left<\hat e_n,c^\dagger \,c\hat e_n\right>=\|c\hat e_n\|^2=(\epsilon_n-\epsilon_0)\|\hat e_{n}\|^2,
$$
for all $n\geq1$, which is only compatible, if $\hat e_n\neq0$, with (\ref{14}). This, of course, imposes some constraint on the characteristic function $f(x)$, for instance that $f(x)$ is strictly increasing, as we will assume from now on. This also guarantees that each $\epsilon_n$ is not degenerate. Now, it is not hard to compute the right normalization to produce an orthonormal (o.n) basis of eigenvectors of $H$. They are
\be
e_n=\frac{1}{\sqrt{(\epsilon_n-\epsilon_0)!}}\,\hat e_n=\frac{1}{\sqrt{(\epsilon_n-\epsilon_0)!}}\,(c^\dagger)^n\hat e_0,
\label{15}\en
where $(\epsilon_n-\epsilon_0)!=(\epsilon_n-\epsilon_0)(\epsilon_{n-1}-\epsilon_0)\cdots(\epsilon_1-\epsilon_0)$, $n\geq1$ and $0! =1$. These vectors satisfy the following relations:
\be
\left\{
    \begin{array}{ll}
c^\dagger e_n=\sqrt{\epsilon_{n+1}-\epsilon_0}\,\,e_{n+1},\\
c e_n=\sqrt{\epsilon_{n}-\epsilon_0}\,\,e_{n-1},
\end{array}
        \right.
\label{16}\en
which together imply that
\be
cc^\dagger e_n=\left(\epsilon_{n+1}-\epsilon_0\right)\,e_{n}, \qquad c^\dagger c \,e_n=\left(\epsilon_{n}-\epsilon_0\right)\,e_{n},
\label{17}\en
for all $n\geq0$. In particular these imply that the original Hamiltonian $H$ can be written as $H=c^\dagger c+\epsilon_0\1$, so that $H-\epsilon_0\1$ is factorizable and coincides with $c^\dagger c$. This suggests to introduce a second operator constructed using $c$ and its adjoint, the so-called SUSY-Hamiltonian $H_{Susy}=cc^\dagger+\epsilon_0\1$. The eigenvectors of $H_{Susy}$ are the same as those of $H$, while its eigenvalues are simply shifted:
\be
H_{Susy}e_n=\epsilon_{n+1}e_n,
\label{18}\en
$n\geq0$. $H_{Susy}$ plays also a role in the determination of the characteristic function of the GHA. In fact, since $H_{Susy}=[c,c^\dagger]+c^\dagger c+\epsilon_0\1=f(H)-H+H=f(H)$, they are really the same object.

\section{Deformed GHA}\label{sectII}

\subsection{General considerations}\label{sectII_1}

In this section we will show how the general structure considered in Section \ref{sectAITGHA} can be extended in order to include Hamiltonian operators which are not necessarily self-adjoint.

The starting point of our analysis is the following definition:
\begin{defn}
\label{def1}
Let $a$, $b$ and $h$ be three operators defined on some dense domain $\D$ of the Hilbert space $\Hil$. Let us assume that $\D$ is stable under their action and under the action of their adjoints. We say that $(a,b,h)$ are {\em compatible} if, for all $\varphi\in\D$, the following equalities hold:
\be
hb\,\varphi=bf(h)\,\varphi, \qquad ah\,\varphi=f(h)a\,\varphi,
\label{21}
\en
for some fixed, strictly increasing, function $f(x)$.
\end{defn}

\vspace{2mm}

{\bf Remarks:--} (1) The first remark is that, if $h$ is bounded and $f(x)$ admits an expansion in power series $f(x)=\sum_{n=0}^\infty c_n x^n$ convergent inside an interval $|x|<M$, then $f(h)$ can be defined as $f(h)=\sum_{n=0}^\infty c_n h^n$, at least if $\|h\|<M$. On the other hand, if $h$ is unbounded but self-adjoint, $h=h^\dagger$, $f(h)$ can be defined by means of the spectral theorem. If $h$ is unbounded and not self-adjoint, defining $f(h)$ can be more complicated, but still it can be done in some cases, for instance if $h$ is similar to another operator $H$, $H=H^\dagger$, at least if the similarity map is given by a bounded operator with bounded inverse. In the most general case we can define $f(h)$ via its action on a basis of $\Hil$. Of course, the natural choice of basis would be, when possible, the set of eigenstates of $h$.

(2) In principle there is no reason a priori for taking a single function $f(x)$ in both equalities in (\ref{21}). For instance, we could assume that $hb\varphi=bf_b(h)\varphi$ and $ah\varphi=f_a(h)a\varphi$ for two different functions $f_a(x)$ and $f_b(x)$. However, in view of what we will do next, this would not be a useful choice. Moreover, the reason why we have assumed here that $f(x)$ is strictly increasing is because of what was discussed in the previous section on $\epsilon_n-\epsilon_0$, and because we want to avoid some of the eigenvalues of $h$ to be degenerate.

(3) If $h=h^\dagger$ and $a=b^\dagger$ the two equalities in (\ref{21}) collapse into a single one, which is the one considered in \cite{curado1,curado2} and reviewed in Section \ref{sectAITGHA}. Otherwise, they are different. Interestingly, they can be seen as two different intertwining relations between $h$ and $f(h)$, one due to $a$ and the other to $b$. In principle, these two equations are really independent, since $a$ and $b$ are unrelated, so far. However, in the following, see (\ref{25}), they are assumed to satisfy a suitable commutation rule, so that they are, in fact, connected. Of course, with this in mind, the whole machinery of intertwining relations, see \cite{intop,bagint1,bagint2}, could be considered in connection with our operators. However, this analysis is not particularly relevant for us now and it is postponed to a future paper.

\vspace{3mm}

\begin{prop}\label{prop1}
Let $(a,b,h)$ be compatible operators and let $\epsilon_0$ be a fixed (non-negative) real number. Let us call $\epsilon_{n+1}=f(\epsilon_n)$, $n\geq0$. Suppose now that two non-zero vectors, $\varphi_0$ and $\psi_0$, in $\D$ do exist such that $h\varphi_0=\epsilon_0\varphi_0$ and $h^\dagger \psi_0=\epsilon_0\psi_0$. Let us call
\be
\varphi_n=\frac{1}{\sqrt{(\epsilon_n-\epsilon_0)!}}\,b^n\varphi_0\qquad \psi_n=\frac{1}{\sqrt{(\epsilon_n-\epsilon_0)!}}\,(a^\dagger)^n\psi_0,
\label{22}\en
for all $n\geq0$,  where  $(\epsilon_n-\epsilon_0)!$ is defined below Eq. (\ref{15}).
Then:

\be
h\varphi_n=\epsilon_n\varphi_n,\qquad h^\dagger \psi_n=\epsilon_n\psi_n,
\label{23}\en
for $n\geq0$. Moreover,
\be
\left<\psi_n,\varphi_m\right>=\delta_{n,m}\left<\psi_n,\varphi_n\right>.
\label{24}\en

\end{prop}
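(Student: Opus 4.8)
The plan is to prove the three claims in sequence, since each builds on the previous one. First I would establish the eigenvalue relations in (\ref{23}). The key observation is that the compatibility conditions (\ref{21}) give, by an immediate induction, that $b^n$ and $(a^\dagger)^n$ act as raising operators with respect to the eigenvalues. Concretely, from $hb\varphi = bf(h)\varphi$ I would show that if $h\varphi_{n-1}=\epsilon_{n-1}\varphi_{n-1}$ then $h(b\varphi_{n-1}) = bf(h)\varphi_{n-1} = f(\epsilon_{n-1})\,b\varphi_{n-1} = \epsilon_n\,b\varphi_{n-1}$, using that $\epsilon_n = f(\epsilon_{n-1})$. Since $\varphi_n$ is a scalar multiple of $b^n\varphi_0$, this gives $h\varphi_n=\epsilon_n\varphi_n$. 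For the adjoint statement I would take the adjoint of the second equation in (\ref{21}), namely $ah\varphi = f(h)a\varphi$, to obtain $h^\dagger a^\dagger \psi = a^\dagger f(h)^\dagger\psi = a^\dagger f(h^\dagger)\psi$ on $\D$ (here I assume $f(h)^\dagger = f(h^\dagger)$, which holds for the functional calculus under the hypotheses of Remark (1)), and then run the identical induction starting from $h^\dagger\psi_0=\epsilon_0\psi_0$.

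The biorthogonality relation (\ref{24}) is the real content. I would prove that $\langle\psi_n,\varphi_m\rangle=0$ whenever $n\neq m$ by exploiting that $\varphi_m$ is an eigenvector of $h$ with eigenvalue $\epsilon_m$ and $\psi_n$ is an eigenvector of $h^\dagger$ with eigenvalue $\epsilon_n$. The standard biorthogonality argument runs as follows: compute $\langle h^\dagger\psi_n,\varphi_m\rangle$ in two ways. On one hand it equals $\langle\epsilon_n\psi_n,\varphi_m\rangle=\epsilon_n\langle\psi_n,\varphi_m\rangle$ (the $\epsilon_n$ are real, being built from $\epsilon_0\in\R$ via the real-valued $f$). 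On the other hand, moving $h^\dagger$ across the inner product gives $\langle\psi_n,h\varphi_m\rangle=\langle\psi_n,\epsilon_m\varphi_m\rangle=\epsilon_m\langle\psi_n,\varphi_m\rangle$. Subtracting yields $(\epsilon_n-\epsilon_m)\langle\psi_n,\varphi_m\rangle=0$, and since $f$ is strictly increasing the sequence $\{\epsilon_n\}$ is strictly monotone, hence $\epsilon_n\neq\epsilon_m$ for $n\neq m$. Therefore $\langle\psi_n,\varphi_m\rangle=0$, which is exactly the off-diagonal content of (\ref{24}); the diagonal case $n=m$ is the trivial identity $\langle\psi_n,\varphi_n\rangle=\langle\psi_n,\varphi_n\rangle$.

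The main obstacle I anticipate is not the algebra but the domain bookkeeping. Every manipulation, in particular moving $h^\dagger$ from the left slot of the inner product to $h$ in the right slot, requires that all vectors involved lie in $\D$ and that $\D$ is stable under $h$, $h^\dagger$, $a$, $b$ and their adjoints. This stability is exactly what Definition \ref{def1} guarantees, so I would invoke it explicitly to justify that $\varphi_n,\psi_n\in\D$ and that the adjoint relations hold pointwise on $\D$ rather than merely formally. The one genuinely delicate point is the identity $f(h)^\dagger=f(h^\dagger)$ used in the induction for $\psi_n$; under the assumptions of Remark (1) (for instance $f$ a power series with $h$ bounded, or $f$ via spectral calculus when $h=h^\dagger$, or via the intertwining with a self-adjoint $H$) this is legitimate, but I would flag that in the fully general non-self-adjoint case it is part of the standing hypotheses on how $f(h)$ is defined. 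Apart from this, the proof is short and essentially formal.
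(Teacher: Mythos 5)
Your proof is correct and follows essentially the same route as the paper: induction on the intertwining relations (\ref{21}) for the eigenvalue equations (\ref{23}) (with the $\psi_n$ case handled via the adjoint of $ah=f(h)a$), and the standard eigenvalue-difference argument, using that the $\epsilon_n$ are real and pairwise distinct, for the biorthogonality (\ref{24}), which the paper states as a ``simple consequence'' without writing it out. Your explicit flagging of the identity $f(h)^\dagger=f(h^\dagger)$ and of the domain stability is a detail the paper leaves implicit, but it is not a different method.
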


\begin{proof}

Formulas in (\ref{23}) can be proved by induction using (\ref{21}). For instance, for $n=1$ we have $$
h\varphi_1=\frac{1}{\sqrt{(\epsilon_1-\epsilon_0)!}}\,h\,b\varphi_0=\frac{1}{\sqrt{(\epsilon_1-\epsilon_0)!}}\,b\,f(h)\varphi_0=f(\epsilon_0)
\frac{1}{\sqrt{(\epsilon_1-\epsilon_0)!}}\,b\,\varphi_0=\epsilon_1\varphi_1.
$$
Let us now assume that $h\varphi_n=\epsilon_n\varphi_n$ for a fixed $n$. To prove that $h\varphi_{n+1}=\epsilon_{n+1}\varphi_{n+1}$ we observe that
$$
h\varphi_{n+1}=\frac{1}{\sqrt{(\epsilon_{n+1}-\epsilon_0)}}\,h\,b\varphi_n=\frac{1}{\sqrt{(\epsilon_{n+1}-\epsilon_0)}}\,b\,f(h)\varphi_n=
f(\epsilon_n)
\frac{1}{\sqrt{(\epsilon_{n+1}-\epsilon_0)!}}\,b\,\varphi_n=\epsilon_{n+1}\varphi_{n+1}.
$$
A similar proof holds for the vectors $\psi_n$, by using the adjoint of the equality $ah=f(h)a$. Formula (\ref{24}) is a simple consequence of the eigenvalues equations in (\ref{23}), and of the fact that the various $\epsilon_n$ are all different.

\end{proof}

\vspace{2mm}

{\bf Remarks:--} (1) Due to the stability of $\D$ it is clear that all the vectors $\varphi_n$ and $\psi_n$ belong to $\D$.

(2) For all $n\geq1$ the quantities $\epsilon_n-\epsilon_0$ are strictly positive, and $\{\epsilon_n\}$ is a strictly increasing sequence.

(3) The above result could be generalized to the situation in which the eigenvalues of $h$ and of $h^\dagger$ are different. However, doing so, we would lose the isospectrality of these two operators, which on the other hand we prefer to keep, since it has very useful consequences. {This kind of generalization has been discussed, for instance, in \cite{bagcomplex}, in finite
dimensional Hilbert spaces.}

\vspace{3mm}

Now, formulas (\ref{22}) show that $b$ and $a^\dagger$ are raising operators for the vectors in $\F_\varphi=\{\varphi_n, \,n\geq0\}$ and $\F_\psi=\{\psi_n, \,n\geq0\}$ respectively. We expect that $a$ and $b^\dagger$ are lowering operators. However, this is not true in general, but it is true if the following commutation rule between $a$ and $b$ is satisfied:
\be
[a,b]=f(h)-h,
\label{25}\en
at least on $\D$. Of course, this also implies that $[b^\dagger,a^\dagger]=f(h^\dagger)-h^\dagger$ on $\D$. The following lowering equations can now easily be proved, if $a\varphi_0=b^\dagger\psi_0=0$\footnote{As in Proposition \ref{prop0} these equalities are surely true if the sets $\F_\varphi$ and $\F_\Psi$ are complete.}:
\be
a\varphi_n=\sqrt{\epsilon_n-\epsilon_0}\,\varphi_{n-1},\qquad b^\dagger\psi_n=\sqrt{\epsilon_n-\epsilon_0}\,\psi_{n-1},
\label{26}\en
for all $n\geq1$. Also, we get
\be
\left\{
    \begin{array}{ll}
ba\varphi_n=(\epsilon_n-\epsilon_0)\varphi_n,\qquad\quad ab\varphi_n=(\epsilon_{n+1}-\epsilon_0)\varphi_n\\
a^\dagger b^\dagger\psi_n=(\epsilon_n-\epsilon_0)\psi_n,\qquad\, b^\dagger a^\dagger\psi_n=(\epsilon_{n+1}-\epsilon_0)\psi_n,
\end{array}
        \right.
\label{27}\en
which show that $\varphi_n$ is an eigenstate of both $ba$ and $ab$, while each $\psi_n$ is eigenstate of both $a^\dagger b^\dagger$ and $b^\dagger a^\dagger$.
Now, under the assumptions of Proposition \ref{prop1}, if $\left<\psi_0,\varphi_0\right>=1$, it is a standard computation to check that  $\left<\psi_n,\varphi_m\right>=\delta_{n,m}$.

\begin{defn}\label{defDGHA}

The set of commutation rules in (\ref{21}) and (\ref{25}) satisfied by the operators $a$, $b$ and $h$ define a {\em deformed generalized Heisenberg algebra} (DGHA).

\end{defn}

\vspace{2mm}

{\bf Remark:--} In view of what we have seen in Section \ref{sectAITGHA}, and of the eigenvalue equations in (\ref{27}), two SUSY Hamiltonians can be introduced for $h$ and $h^\dagger$, and their eigenvalues and eigenvectors can be easily be deduced out of those above.

\subsection{Constructing a deformed GHA from a GHA }
\label{DGHA}

Let $c$ and $H$ be operators satisfying the GHA as discussed in Section \ref{sectI}. As we have seen, they satisfy the following:
$$
cH=f(H)c,\qquad [c,c^\dagger]=f(H)-H,
$$
where $H=H^\dagger$ and $f(x)$ is a strictly increasing function. If we call $e_0$ the ground state of $H$, we require that $ce_0=0$ and that $e_0\in\D$, a suitable dense subset of $\Hil$. Let now $S$ be an invertible operator which leaves $\D$ stable, together with $S^{-1}$. Then, introducing $a=ScS^{-1}$, $b=Sc^\dagger S^{-1}$, $\varphi_0=Se_0$, $\psi_0=(S^{-1})^\dagger e_0$ and $h=SHS^{-1}$, it is clear that these new operators and vectors satisfy all properties required in Section \ref{sectII_1}. In particular, (\ref{21}) and (\ref{25}) are satisfied. Hence two biorthogonal sets can be constructed as in (\ref{22}), and these are eigenstates of $h$ and $h^\dagger$. Notice that, as it is discussed extensively in \cite{baginbagbook}, these two sets in general are not bases for the Hilbert spaces, even if they turn out, quite often in concrete examples, to be complete. However, if both $S$ and $S^{-1}$ are bounded, then $\F_\varphi$ and $\F_\psi$ are biorthogonal Riesz bases. Otherwise they are $\D$-quasi bases, see Appendix and reference \cite{baginbagbook}.

\vspace{2mm}

These simple steps show how a GHA can be modified in order to get a DGHA. In a certain sense, this result can be inverted: under natural assumptions, any DGHA gives rise to a GHA. Let us consider the operators $(a,b,h)$ satisfying Definition \ref{defDGHA}. To simplify the situation, we assume here that the sets $\F_\varphi$ and $\F_\psi$ constructed as discussed in Proposition \ref{prop1} are biorthogonal Riesz bases. Then the operators $S_\varphi f:=\sum_{n}\left<\varphi_n,f\right>\varphi_n$ and $S_\psi g:=\sum_{n}\left<\psi_n,g\right>\psi_n$ can be defined in all of $\Hil$. Moreover, because of our assumption, a bounded operator $R$ exists, with bounded inverse $R^{-1}$, and an o.n. basis $\F_v=\{v_n\}$ such that $\varphi_n=Rv_n$ and $\psi_n=(R^{-1})^\dagger v_n$, for all $n$. Then we easily deduce that $S_\varphi=RR^\dagger$, while $S_\psi=S_\varphi^{-1}$. Hence $S_\varphi$ and $S_\psi$ are both bounded and positive. Hence their (unique) positive square roots exist. If, for simplicity, we assume that also $S_\varphi^{1/2}$, $S_\psi^{1/2}$ leave $\D$ invariant, then the operator $c=S_\psi^{1/2}aS_\varphi^{1/2}$ also maps $\D$ in $\D$. This surely happens if $\D=\Lc_\varphi\cap \Lc_\psi$, where $\Lc_\varphi$ is the linear span of the $\varphi_n$'s and $\Lc_\psi$ is the linear span of the $\psi_n$'s. Notice that both $\Lc_\varphi$ and $\Lc_\psi$ are dense in $\Hil$,since $\F_\varphi$ and $\F_\psi$ are Riesz bases for $\Hil$, and we are here assuming that their intersection is dense as well. The adjoint of $c$, $c^\dagger$, turns out to be $c^\dagger=S_\psi^{1/2}bS_\varphi^{1/2}$ on $\D$. Now, if we introduce a new operator $H$ on $\D$ as $Hg=S_\psi^{1/2}hS_\varphi^{1/2}g$, $g\in\D$, and new vectors $e_n=S_\varphi^{1/2}\psi_n$, we go back to what discussed in Section \ref{sectAITGHA}.

\section{Some classical examples}\label{sectIII}

We now discuss some classical examples which fit our assumptions,
and we will deform them according with what discussed in Section \ref{DGHA}.

\subsection{P\"oschl-Teller potentials}

As a concrete example of the general scheme presented above, we first consider the quantum model of  a one-dimensional particle subjected to the symmetric P\"oschl-Teller potential
$$
V_{\lambda}(x)=\frac{\lambda(\lambda-1)}{\sin^2x}\, ,
$$
where $\lambda\geq1$ and $x\in (0,\pi)$. For $\lambda > 1$, this potential is a regularization of the infinite square well ($\lambda =1$) and extrapolates both the latter and the harmonic oscillator (for small $\vert x-\pi/2\vert$). The Hamiltonian of the particle is, fixing
$\hbar=2m=1$, $$H_{\lambda}=-\frac{\ud^2}{\ud x^2}+V_{\lambda}(x)\, , $$
 and the eigenvalue equation for $H_{\lambda}$ can be explicitly solved, (see for instance \cite{antoine_etal_01} and references therein):
\be
H_{\lambda}\,e^{\lambda}_n(x)=\epsilon^{\lambda}_n e^{\lambda}_n(x)\, ,
\label{28}\en
where $\epsilon^{\lambda}_n=(n+\lambda)^2$ and
\be
e^{\lambda}_n(x)=K^{\lambda}_n\,\sin^{\lambda} x\,\mathrm{C}_n^{\lambda}\left(\cos x\right)\,.
\label{29}\en
Here
$$
K^{\lambda}_n=\Gamma(\lambda)\frac{2^{\lambda-1/2}}{\sqrt{\pi}}\sqrt{\frac{n!(n+\lambda)}{\Gamma(n+2\lambda)}}
$$
is a normalization constant and $\mathrm{C}_n^{\lambda}$ is the  Gegenbauer polynomial of degree $n$ \cite{magnus66}. The set $\{e^{\lambda}_n(x)\}$ is an orthonormal basis.
We can check that $\epsilon^{\lambda}_{n+1}=(\sqrt{\epsilon^{\lambda}_n}+1)^2$, so that $f(x)=(\sqrt{x}+1)^2$ is the characteristic function for the system.

\vspace{2mm}

{\bf Remark:--}
For the Hamiltonian $H_{\lambda}$  a SUSY approach has been discussed in \cite{bergasiyou10,bersiyou12}, with corresponding lowering and raising operators, $A_{\lambda}= \dfrac{\ud}{\ud x} -\lambda \cot x$ and  $A^{\dag}_{\lambda}= -\dfrac{\ud}{\ud x} -\lambda \cot x$. In this case one finds the Darboux factorization $H_{\lambda}= A^{\dag}_{\lambda} A_{\lambda} +\epsilon^{\lambda}_0$, and one gets $H_{\lambda+1}=A_{\lambda}\, A^{\dag}_{\lambda} + \epsilon^{\lambda}_0$ for its partner. However, this representation of $H_\lambda$ does not satisfy the assumptions of a GHA since  $A_{\lambda}$ and  $A^{\dag}_{\lambda}$ are not ladder operators. In fact, they shift both the polynomial degree and the parameter $\lambda$ as
$$
A_{\lambda}\, e^{\lambda}_n(x) = \sqrt{\epsilon^{\lambda}_n-\epsilon^{\lambda}_0 }\, e^{\lambda+1}_{n-1}(x)\, , \quad A^{\dag}_{\lambda}\, e^{\lambda+1}_n(x) = \sqrt{\epsilon^{\lambda}_{n+1}-\epsilon^{\lambda}_0 }\, e^{\lambda}_{n+1}(x)\,.
$$
as expected from the  SUSY quantum mechanics formalism, which is not what should be satisfied. In fact, to be relevant for our purposes, we should find a factorization which leaves $\lambda$ unchanged, while changing $n$.  Hence, let us introduce the following ladder operators,
\begin{equation}
\label{Bnl}
B_{\lambda}= -\sin x \frac{\ud}{\ud x}+\cos x\,(\hat{N}_{\lambda}+\lambda)\, , \quad B^{\dag}_{\lambda}= \sin x \frac{\ud}{\ud x}+(\hat{N}_{\lambda}+\lambda +1)\,\cos x\, ,
\end{equation}
where the diagonal ``number'' operator $\hat{N}_{\lambda}$ is defined by its action on the basis \eqref{29} as
\begin{equation}
\label{Nl}
\hat{N}_{\lambda}\, e^{\lambda}_n(x)= n e^{\lambda}_n(x)\,.
\end{equation}
The actions of the operators \eqref{Bnl} on the basis are easily derived from \eqref{uEn} and \eqref{u2dEn} by putting $u = \cos x$:
\begin{align}
\label{BEn}
B_{\lambda} \, \mathcal{E}^{\lambda}_{n}(u)&=\sqrt{\frac{n(n+\lambda)(n+2\lambda-1)}{n-1+\lambda}} \mathcal{E}^{\lambda}_{n-1}(u)\,,\\\label{BdEn} B^{\dag}_{\lambda} \, \mathcal{E}^{\lambda}_{n}(u)&= \sqrt{\frac{(n+1)(n+1+\lambda)(n+2\lambda)}{n+\lambda}} \mathcal{E}^{\lambda}_{n+1}(u)\, .
\end{align}
The next step is to build the operator $c$ corresponding to the Hamiltonian $H_{\lambda}$. We first derive from  \eqref{BEn} the diagonal operator
\begin{equation}
\label{BdB}
B^{\dag}_{\lambda}\,B_{\lambda} = \frac{\hat{N}_{\lambda}(\hat{N}_{\lambda}+ \lambda)(\hat{N}_{\lambda}+\lambda-1)}{\hat{N}_{\lambda} -1 +\lambda}= \left(G_{\lambda}\left(\hat{N}_{\lambda}\right)\right)^{-2}\, \left(H_{\lambda} -\lambda^2\right)\, ,
\end{equation}
where, for $\lambda >1$,  the strictly increasing positive bounded function $G_{\lambda}(t)$ is defined by
\begin{equation}
\label{Gx}
G_{\lambda}(t)= \sqrt{\frac{(t+2\lambda)(t-1+\lambda)}{(t+2\lambda -1)(t+\lambda)}}\, , \quad 0 <\sqrt{\frac{2(\lambda-1)}{2\lambda-1}} \leq G_{\lambda}(t) < 1
\end{equation}
The limit case of the infinite square well, for which $\lambda =1$, deserves a particular treatment and will be examined in the sequel.
We can rewrite $G_{\lambda} (t)$ as:
\begin{equation}
\label{GT}
G_{\lambda}(t)= T_\lambda^{1/2}(t-1) T_\lambda^{-1/2}(t) \, ,
\end{equation}
where
\begin{equation}
\label{Tlambda}
T_{\lambda}(t)= \frac{t+ \lambda }{t+2 \lambda } \, ,
\end{equation}
which is also positive bounded with bounded inverse and it is a
monotonically increasing function, from $1/2$ ($t=0$) to $1$ (t $\to \infty$).

We can now introduce our  operators corresponding to $c$ and $c^{\dag}$,
\begin{align}
\label{cnl}
C_{\lambda}=& B_{\lambda} \,G_{\lambda}(\hat{N}_{\lambda}) = T_{\lambda}^{1/2}(t) B_{\lambda}  T_{\lambda}^{-1/2}(t) \\
C^\dagger_{\lambda}=&G_{\lambda}(\hat{N}_{\lambda})\, B^{\dag}_{\lambda} =T_{\lambda}^{-1/2}(t) B_{\lambda}^\dagger  T_{\lambda}^{1/2}(t) \, ,
\end{align}
where we have used the commutation relations $\mathcal{G}(\hat{N}+1) B_\lambda = B_\lambda \mathcal{G}(\hat{N})$, $\mathcal{G}(\hat{N}-1) B_\lambda^\dagger = B_\lambda^\dagger \mathcal{G}(\hat{N})$,
valid for any smooth function  $\mathcal{G}(\hat{N})$.

It is easy to check that, as simplified  ladder operators, they  obey all expected GHA properties for this example of potential. In particular:
\begin{align}
\label{Clad}
C_{\lambda}\, \mathcal{E}^{\lambda}_{n}(u)&= \sqrt{n(n+2\lambda)}\, \mathcal{E}^{\lambda}_{n-1}(u)= \sqrt{\epsilon_n^{\lambda} - \epsilon_0^{\lambda}}\,\, \mathcal{E}^{\lambda}_{n-1}(u)\, , \\
\label{Clad}
C^{\dag}_{\lambda}\, \mathcal{E}^{\lambda}_{n}(u)&= \sqrt{(n+1)(n+1+2\lambda)}\, \mathcal{E}^{\lambda}_{n+1}(u)= \sqrt{\epsilon_{n+1}^{\lambda} - \epsilon_0^{\lambda}}\,\, \mathcal{E}^{\lambda}_{n+1}(u)\, .
\end{align}
\begin{equation}
\label{CLH}
C_{\lambda}\, H_{\lambda}= f\left(H_{\lambda}\right)\,C_{\lambda} \, , \ \mbox{with} \ f\left(H_{\lambda}\right)= \left(\sqrt{H_{\lambda}} + 1\right)^2\,.
\end{equation}
\begin{equation}
\label{ }
C^{\dag}_{\lambda}\, C_{\lambda}= H_{\lambda}- \epsilon_0^{\lambda}\,I\, , \quad  C_{\lambda}\,C^{\dag}_{\lambda}= f\left(H_{\lambda}\right) - \epsilon_0^{\lambda}\,I\,,\quad \left[C_{\lambda}\,, \,C^{\dag}_{\lambda}\right]= f\left(H_{\lambda}\right) - H_{\lambda}\,.
\end{equation}
A similar physical realization of the P\"oschl-Teller potential was also recently realized, see \cite{RMECLRpra2017}. There, the P\"oschl-Teller creation and annihilation operators are written in a different way, but they are completely equivalent to our $C_\lambda$ and $C_\lambda^\dagger $ operators.

\subsubsection{Deforming P\"oschl-Teller}

If we now want to produce a DGHA, the easiest procedure consists in fixing a bounded operator with bounded inverse, and working as proposed at the beginning of Section \ref{DGHA}. For that let us illustrate the procedure by picking the following function of $x$,
 $ S(x) = (1+2x)/(1+x)$.
It is clear that this  can be considered as a bounded multiplication operator, with bounded inverse, for all $x\in [0,\pi]$. Hence we can use it and $S^{-1}(x)$ to deform the system, similarly to what we have done before: $a_\lambda=S(x)C_\lambda S^{-1}(x)$, $b_\lambda=S(x)C_\lambda^\dagger S^{-1}(x)$, $\varphi_0^\lambda(x)=S(x)e_0^\lambda(x)$ and $\psi_0^\lambda(x)=S^{-1}(x)e_0^\lambda(x)$. In particular, the Hamiltonian takes the following explicit expression:
\be
h_\lambda=S(x)H_\lambda S^{-1}(x)=-\frac{\ud^2}{\ud x^2}+\frac{2}{(1+x)(1+2x)}\,\frac{\ud}{\ud x}+V_\lambda(x)-\frac{4}{(1+x)(1+2x)^2},
\label{ptham}\en
which is manifestly non self-adjoint. This describe an Hamiltonian with a new potential $V_\lambda(x)-\frac{4}{(1+x)(1+2x)^2}$, plus a term which is proportional to a given function of $x$ multiplying the momentum operator. For this Hamiltonian, and its adjoint $h_\lambda^\dagger$, the eigenvectors can be deduced as discussed in Section \ref{sectII_1}.

\subsection{The infinite square well}

As already stated, the infinite square well can be viewed as a particular case of the P\"oschl-Teller potentials, with  $\lambda=1$. So, in principle, we just adapt our previous results to this particular situation simply by fixing this value for $\lambda$. However,  we follow  here a slightly different choice for  the operator $S$ used to deform the GHA.
We also simplify our notations by dropping the subscript ``1", which is proper to the
infinite square well.  Hence, we now consider the Hamiltonian of an infinite well in the interval $[0,\pi]$:
\be
H=-\,\frac{\ud^2}{\ud x^2} + V(x)\, ,
\label{iw1}
\en
 where
$$
V(x)=\left\{
    \begin{array}{ll}
0\qquad\qquad\,\, x\in(0,\pi)\, , \\
\infty\qquad\qquad \mbox{elsewhere}
\end{array}
        \right.\, .
$$
Eigenvalues and corresponding eigenfunctions read $\epsilon_n=(n+1)^2$ and $e_n(x)=\sqrt{\frac{2}{\pi}}\,\sin(n+1)x$ respectively. Hence $He_n=\epsilon_ne_n$, $n\geq0$, and the characteristic function $f(x)$ for the GHA is as before: $f(x)=(\sqrt{x}+1)^2$. The function $G(t)$  becomes
\begin{equation*}
G(t)= \frac{\sqrt{t\,(t+2)}}{t+1}\, ,
\end{equation*}
i.e., the ratio of the geometric mean of $t$ and $t+2$  to the arithmetic one.
The ladder operators \eqref{cnl} assume their simplest expression:
\begin{align}
\label{iw2}
    C &= B \, G(\hat N) = T^{1/2}(\hat N)  \, B \, T^{-1/2}(\hat N) \\
    C^\dagger &=  G(\hat N) \, B^\dagger = T^{-1/2}(\hat N)  \, B^\dagger \, T^{1/2}(\hat N)\, ,
    \end{align}
with $T(t)= (t+1)/(t+2)$ in agreement with Eq. (\ref{Tlambda}) for $\lambda =1$.
$\hat N$ is still the usual number operator $\hat N e_n(x)=ne_n(x)$, i.e., $\hat N = \sqrt{H}-1$.

\subsubsection{Deforming the infinite square-well potential}

What we want to do here is to deform this example in the way we have proposed in Section \ref{DGHA}: for that, let $\sigma(x)$ be a real function, which is supposed to satisfy the following bounds: $0<\sigma_m\leq\sigma(x)\leq\sigma_M<\infty$, for almost all $x\in[0,\pi]$. This means that the inverse of $\sigma(x)$ exists. Moreover, using the spectral theorem, we can define $\sigma\left(\hat N +1\right)$ and this operator commutes with $H$. Therefore, if we take $S:=\sigma\left(\hat N +1\right)$, it follows that $h=H$. Moreover, $\varphi_0=Se_0=\sigma(1)e_0$, while $\Psi_0=(S^{-1})^\dagger e_0=(\sigma(1))^{-1}e_0$. Hence, $\varphi_0$ and $\Psi_0$ differ from $e_0$ only for a normalization, and they satisfy $\left<\varphi_0,\Psi_0\right>=1$. The operators $a$ and $b$ act on $e_n(x)$ in a (formally) easy way. This follows from the fact that $Se_n=\sigma\left(\hat N+1\right)e_n=\sigma(n+1)e_n$. Then, for instance,
$$
b\,e_n=SC^\dagger S^{-1}e_n=\sigma\left(\hat N +1\right) C^\dagger \sigma\left(\hat N+1\right)^{-1}e_n=(\sigma(n+1))^{-1}\sigma\left(\hat N+1\right) C^\dagger e_n=$$ \be=(\sigma(n+1))^{-1}\sigma\left(\hat N +1\right)\sqrt{\epsilon_{n+1}-\epsilon_0}\,e_{n+1}=
\frac{\sigma(n+2)}{\sigma(n+1)}\,\sqrt{\epsilon_{n+1}-\epsilon_0}\,e_{n+1},
\label{iw3}\en
with a similar result for $a$. We see that $b$ is still a raising operator, also with respect to the original o.n. basis $\F_e$, but with a slightly different coefficient, which involves $\sigma$.

\vspace{2mm}

A different conclusion is deduced if we consider a different choice of the operator $S$. In particular, if we take now $S$ to be the following multiplication operator: $S=(\sigma(x))^{-1}$, where $\sigma(x)$ is as before. Then $h$ turns out to be different from $H$, in general. In fact:
$$
h=-\frac{\sigma^{\prime\prime}(x)}{\sigma(x)}-\frac{2\sigma'(x)}{\sigma(x)}\,\frac{\ud}{\ud x}-\frac{\ud^2}{\ud x^2},
$$
which is manifestly non self-adjoint. Let us see what happens with the particular choice of $\sigma(x)=\alpha+\cos(k_0x)$, where $\alpha>1$ and $k_0\geq1$ is a fixed natural number. Of course we have $\sigma_m=\alpha-1>0$ and $\sigma_M=\alpha+1<\infty$. A simple computation shows that
$$
\varphi_0(x)=\sqrt{\frac{2}{\pi}}\,\frac{\sin(x)}{\alpha+\cos(k_0x)}, \qquad \Psi_0(x)=\alpha e_0(x)+\frac{1}{2}\left(e_{k_0+1}(x)-e_{k_0-1}(x)\right).
$$
We see that, while $\Psi_0(x)$ is just a linear combination of three elements of $\F_e$, $\varphi_0(x)$ is an infinite series of such elements.
The analogous of formula (\ref{iw3}) can now be deduced using the equality
$$
S^{-1}e_n(x)=\alpha e_n+\frac{1}{2}\left(e_{n+k_0}(x)+e_{n-k_0}(x)\right),
$$
which follows from some well known trigonometric identities. We restrict here to the case $n\geq k_0$. The opposite case can be easily deduced by simply using the parity properties of $e_n(x)$. We get
$$
b\,e_n=\frac{1}{\alpha+\cos(k_0x)}\left(\alpha\sqrt{\epsilon_{n+1}-\epsilon_0}\,e_{n+1}+\frac{\sqrt{\epsilon_{n+1+k_0}-\epsilon_0}}{2}e_{n+k_0+1}+
\frac{\sqrt{\epsilon_{n+1-k_0}-\epsilon_0}}{2}e_{n-k_0+1}
\right),
$$
which clearly shows how $b$ is no longer a raising operator, in this case, for the family $\{e_n\}$. The action of $a$ on $e_n$ can be deduced in a similar way. As for $h$, we get
$$
h=\frac{1}{\alpha+\cos(k_0x)}\left(k_0\cos(k_0x)+2k_0\sin(k_0x)\,\frac{\ud}{\ud x}-(\alpha+\cos(k_0x))\frac{\ud^2}{\ud x^2}\right).
$$
This operator looks extremely different from the one in (\ref{ptham}), even when $\lambda$ in $h_\lambda$ is fixed to be one. This is a consequence of the two different choices of the similarity operator $S$ in this case, and in (\ref{ptham}).

\subsection{The harmonic oscillator}

\subsubsection{A familiar preliminary}
Let $c$ be the standard bosonic lowering operator, satisfying the canonical commutation rule $[c,c^\dagger]=\1$, and let $H_0=c^\dagger c$. Of course, $H_0=H_0^\dagger$. Working in the coordinate representation, $e_0(x)=\frac{1}{\pi^{1/4}}e^{-x^2/2}$, $x \in (-\infty,\infty)$, is a function annihilated by $c$: $ce_0=0$. Now, taking $f(x)=x+1$ and identifying $\D$ with the set of test functions $\Sc(\Bbb R)$, a DGHA is trivially recovered if $a=c$, $b=c^\dagger$ and $h=H_0$. In this case, clearly, $\varphi_0(x)=\psi_0(x)=e_0(x)$, and $\D$ is stable under the action of $c$ and $c^\dagger$. Also, $f(x)$ is strictly increasing.

The physical realization of the operators $c = (1/\sqrt{2}) (x+ i d/dx)$ and its adjoint $c^\dagger$ is well-known, of course,  and they satisfy the Weyl-Heisenberg algebra.

\subsubsection{Deforming the harmonic oscillator}

In order to implement a non self-adjoint deformation of the harmonic oscillator, according to
Section \ref{DGHA}, we have to find a positive, bounded, with a bounded inverse, multiplication operator $S(x)$. As the variable $x \in (-\infty,\infty)$, a function like $S(x) =2+ \tanh(x)$ satisfy all necessary requirements of Section \ref{DGHA}. This function increases, monotonically, from $1$ ($ t \to -\infty$) up to $3$ ($t \to \infty$).
Using $S(x)$  and its inverse, $S^{-1}(x)$, the harmonic oscillator can be deformed in the following way:
$a = S(x) \, c \,  S^{-1}(x)$, $b = S(x) \, c^\dagger \, S^{-1}(x)$, $\varphi_0(x) = S(x) e_0(x)$ and $\psi_0(x) =
S^{-1}(x) e_0(x)$. The new  Hamiltonian takes the form:
\begin{equation}
\label{hhod}
h = S(x) \, H_0 \, S^{-1}(x) = -\frac{\ud^2}{\ud x^2}+ 2 (1-\tanh(x)) \,\frac{\ud}{\ud x} - 2 (1-\tanh(x)) + \frac{x^2}{2} \, ,
\end{equation}
which is clearly non self-adjoint. This  system has an effective potential $V_{eff}(x) =\frac{x^2}{2} - 2 (1-\tanh(x)) $, shown in Fig \ref{potentialho}, that is non symmetric and is slightly displaced from the origin. There is also a term that is a function of $x$ multiplied by the derivative  $\ud / \ud x$, which seems to appear any time we use a function of $x$ to deform the GHA.

\begin{figure}
\begin{center}
\includegraphics[width=3in]{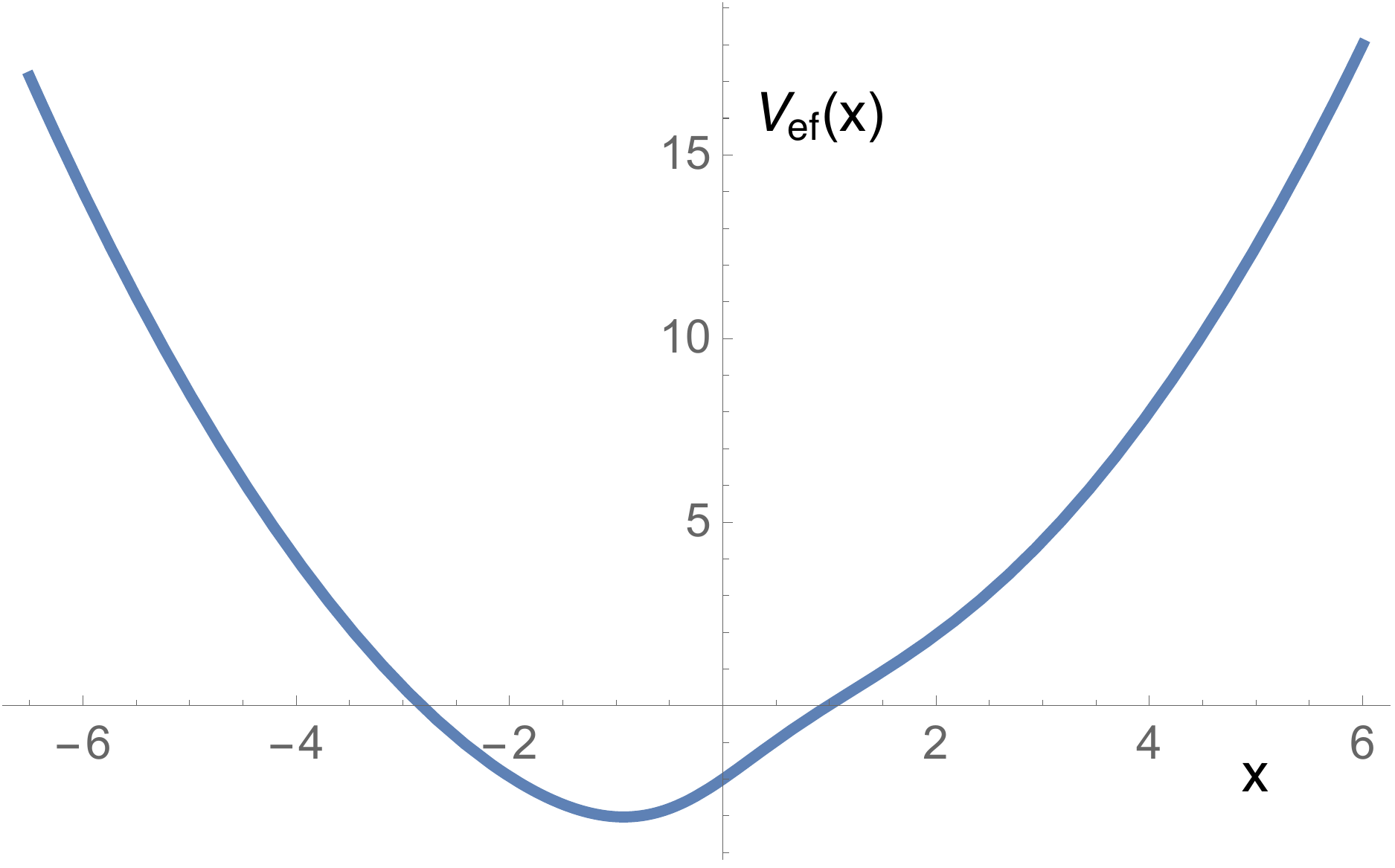}
\caption{Effective potential of a deformed non self-adjoint harmonic oscillator, with $S(x) = 2 + \tanh(x)$. }
\label{potentialho}
\end{center}
\end{figure}

\section{Relations with nonlinear pseudo-bosons}\label{sectIV}

\subsection{$\D$-pseudo bosons}

Let $A$ and $B$ be pseudo-bosonic operators in the sense of \cite{baginbagbook}, i.e., $[A,B]= \1$ and $B$ is supposed to be not equal to $A^{\dag}$,  and let $\D$ be the dense domain left stable by these operators and by their adjoints. In this situation the assumptions in Section \ref{sectII_1} are satisfied if we take $h=BA$ and $f(x)=x+1$, as for the harmonic oscillator. The sets of eigenvectors $\F_\varphi$ and $\F_\psi$ can be biorthogonal (Riesz) bases, or $\D$-quasi bases, see Appendix, depending on the explicit details of the original pair of operators $(A,B)$, as widely discussed in \cite{baginbagbook} and in references therein.

\subsection{Examples related to $\D$-pseudo bosons}

The pseudo-bosonic operators $A$ and $B$ used in the previous example can be used to construct a new class of examples. For that we define new operators
$$
a=A, \quad b=N_0^kB,
$$
where $N_0=BA$ and $k$ is a fixed positive integer, $k=1,2,3,\ldots$. We further define $N=ba=N_0^{b+1}$ and $h=N=N_0^{k+1}$. It is possible to check that, putting $f(h)=[a,b]+h$, we have $f(h)=(N_0+\1)^{k+1}$, for all fixed $k$. Of course, since the eigenvalues of $N_0$ are just the natural numbers, including zero, $f(h)$ is positive and increasing. The equalities in (\ref{21}) can be further checked explicitly, as a consequence of the following equalities:
$$
AN_0^k=(N_0+\1)^kA,\qquad BN_0^k=(N_0-\1)^kA, \qquad N_0^kB=B(N_0+\1)^k,
$$
for all $k=0,1,2,\ldots$.

Few years ago, \cite{fbnlpbs}, the concept of PBs was generalized to consider quantum systems in which the eigenvalues of the Hamiltonian (and its adjoint) do not depend linearly on the quantum number labeling the eigenstates. Few examples were discussed in \cite{fbnlpbs}, and in other and more recent papers. What we will show now is that there is a strong connection between these nonlinear pseudo-bosons (NLPBs) and the DGHA discussed in Section \ref{sectII}. To show that, we briefly recall how NLPBs are constructed.

Let us consider a strictly increasing sequence $\{\epsilon_n\}$: $0=\epsilon_0<\epsilon_1<\cdots<\epsilon_n<\cdots$. Further, let us consider two
operators $A$ and $B$ on $\Hil$, and let us suppose that there exists a set $\D\subset\Hil$ which is dense in $\Hil$, and which is stable under the action of $A, B$ and their adjoints.

\begin{defn}\label{def2}
We will say that the triple $(A,B,\{\epsilon_n\})$ is a family of $\D$-non linear pseudo-bosons ($\D$-NLPBs) if the following properties hold:
\begin{itemize}

\item {\bf p1.} a non zero vector $\Phi_0$ exists in $\D$ such that $A\,\Phi_0=0$;

\item {\bf  p2.} a non zero vector $\eta_0$ exists in $\D$ such that $B^\dagger\,\eta_0=0$;

\item {\bf { p3}.} Calling
\be \Phi_n:=\frac{1}{\sqrt{\epsilon_n!}}\,B^n\,\Phi_0,\qquad \eta_n:=\frac{1}{\sqrt{\epsilon_n!}}\,{A^\dagger}^n\,\eta_0, \label{55} \en we
have, for all $n\geq0$, \be A\,\Phi_n=\sqrt{\epsilon_n}\,\Phi_{n-1},\qquad B^\dagger\eta_n=\sqrt{\epsilon_n}\,\eta_{n-1}. \label{56}\en
\item {\bf { p4}.} The set $\F_\Phi=\{\Phi_n,\,n\geq0\}$ is a basis for $\Hil$.

\end{itemize}

\end{defn}

Of course, since $\D$ is stable under the action of $B$ and $A^\dagger$, it follows that $\Phi_n, \eta_n\in \D$, for all $n\geq0$. Notice that $\D$-PBs are recovered by fixing $\epsilon_n=n$. Notice also that the set $\F_\eta=\{\eta_n,\,n\geq0\}$ is automatically a basis for $\Hil$ as well. This follows from the fact that, calling $M=BA$, we have
$M\Phi_n=\epsilon_n\Phi_n$ and $M^\dagger\eta_n=\epsilon_n\eta_n$. Therefore, choosing the normalization of $\eta_0$ and $\Phi_0$ in such a way
$\left<\eta_0,\Phi_0\right>=1$, $\F_\eta$ is biorthogonal to the basis $\F_\Phi$. Then, it is possible to check that $\F_\eta$ is the unique
basis which is biorthogonal to $\F_\Phi$. We refer to \cite{fbnlpbs} for more details.

\vspace{2mm}

To connect NLPBs to DGHA we first observe that, if we consider a DGHA with $\epsilon_0=0$, this automatically gives rise to a family of $\D$-NLPBs. For that it is sufficient to identify $(A,B,\Phi_0,\eta_0)$ in Definition \ref{def2} with the quantities $(a,b,\varphi_0,\psi_0)$ introduced in Section \ref{sectII}, respectively. In fact, with this identification, conditions {\bf p1}, {\bf p2} and {\bf p3} are surely satisfied. Moreover, $\F_\Phi$ is a basis if $\F_\varphi$ is a basis.

\vspace{2mm}

It is also possible to check that the opposite holds, at least under some further minor assumption: for that we start with a family of $\D$-NLPBs, and we identify $(a,b,h,\varphi_0,\psi_0)$ with $(A,B,BA,\Phi_0,\eta_0)$. Moreover, we identify also $f(h)$ with $AB$. Notice that $f(h)=AB=[A,B]+h$ so that (\ref{25}) is automatically satisfied, at least if $[A,B]$ can be written in terms of $h$ and the resulting $f(x)$ is increasing. This becomes, in the case of $\D$-PBs, $f(h)=h+\1$, as we have already found before. In the general case, it is easy to see that  $\left<\varphi_n,[A,B]\varphi_n\right>=(\epsilon_{n+1}-\epsilon_n)\|\varphi_n\|^2$, for all $n$. This, however, does not imply that $\left<f,[A,B]f\right>$ is automatically positive, since $\F_\varphi$ is not an o.n. basis. However it is yet a strong indication that $[A,B]$ is positive. This can be explicitly checked at least on those $f\in\D$ for which each $\left<f,\varphi_n\right>\left<\psi_n,f\right>$ is non negative, at least if $\inf_n(\epsilon_{n+1}-\epsilon_n)>0$. In fact, in this case, we have
$$
\left<f,[A,B]f\right>=\sum_n(\epsilon_{n+1}-\epsilon_n)\left<f,\varphi_n\right>\left<\psi_n,f\right>\geq \inf_n(\epsilon_{n+1}-\epsilon_n)\sum_n\left<f,\varphi_n\right>\left<\psi_n,f\right>=$$
$$=\inf_n(\epsilon_{n+1}-\epsilon_n)\|f\|^2>0,
$$
using the fact that $\F_\varphi$ and $\F_\psi$ are biorthogonal (or $\D$-quasi) bases. Finally notice that the commutation rules for DGHA in (\ref{21}) are trivially satisfied with our choices: $hb=(ba)b$, and $bf(h)=b(ab)$. Also, $ah=a(ba)$, while $f(h)a=(ab)a$, and we see that, in fact, (\ref{21}) are satisfied.

\vspace{2mm}

\subsection{Quons}

In \cite{fbnlpbs} it is discussed how quons are connected with $\D$-NLPBs. Therefore it is not a surprise that quons are connected to DGHA. Let us first consider {\em ordinary quons}, i.e. operators $c$ and $c^\dagger$ obeying the following commutation rule: $c\,c^\dagger-q\,c^\dagger\,c=\1$, where $q\in[-1,1]$. Of course, $q=-1$ gives back CAR, while if $q=1$ we recover CCR. In general, if we introduce $a=c$, $b=c^\dagger$ and $h=c^\dagger c$, it is easy to check that they give rise to a DGHA with $f(x)=qx+1$, which is increasing if $q\in]0,1]$. Therefore, under this limitation, we recover the algebraic structure discussed in Section \ref{sectI}.

The same conclusion can be found if we consider a deformed version of quons, see \cite{bagquons2}. In this case we have two operators, $a$ and $b$, with $b\neq a$, satisfying $a\,b-q\,b\,a=\1$, where $q\in[-1,1]$, and we define $h=ba$. Of course these commutation rules should be defined on a dense set, possibly stable under the action of the operators involved in the game. Once again, also in this case, it is possible to show that these operators give rise to a DGHA with the same $f(x)$ as for the ordinary quons.

\section{Conclusions}\label{sectconcl}

We have shown how GHA can be deformed using ideas borrowed from the theory of pseudo-bosons, and that, in this way, biorthogonal sets of eigenvectors of the related, non self-adjoint operators, can be explicitly constructed. This strategy has been applied to several examples, and relations with NLPBs and quons have also been described. We plan to consider more applications and construct new quantum solvable models adopting our ideas. {  In particular, it will be interesting to see what our strategy can give when taking the systems in \cite{quons} as starting points.}

\section*{Acknowledgements}
FB acknowledges support from the GNFM of Indam and from the University of Palermo. JPG acknowledges the CBPF for financial support, and EMFC acknowledges the Brazilian scientific agencies CNPq and FAPERJ for financial support.

\appendix

\numberwithin{equation}{section}

\section{Gegenbauer recurrence formulae and useful definitions}
\subsection{Gegenbauer recurrences}
The following formulae are relevant to the content of this article. They are derived from \cite{magnus66} and adapted to our needs. Let us consider in the Hilbert space $L^2\left([-1,1]\,,\, \dfrac{\ud u}{\sqrt{1-u^2}}\right)$ the orthonormal basis
\begin{equation}
\label{Enu}
\mathcal{E}^{\lambda}_{n}(u)= K_n(\lambda)\,(1-u^2)^{\lambda/2}\, \mathrm{C}_n^{\lambda}(u)\, , \quad K^{\lambda}_n(\lambda)=\Gamma(\lambda)\frac{2^{\lambda-1/2}}{\sqrt{\pi}}\sqrt{\frac{n!(n+\lambda)}{\Gamma(n+2\lambda)}}\, .
\end{equation}
The first recurrence formula concerns the self-adjoint multiplication operator $Qf(u) = u f(u)$.
\begin{equation}
\label{uEn}
u\, \mathcal{E}^{\lambda}_{n}(u)= \frac{1}{2\sqrt{n+\lambda}}\,\left[\sqrt{\frac{n(n+2\lambda-1)}{(n-1+\lambda)}}\, \mathcal{E}^{\lambda}_{n-1}(u) + \sqrt{\frac{(n+1)(n+2\lambda)}{(n+1+\lambda)}}\, \mathcal{E}^{\lambda}_{n+1}(u)\right]\,.
\end{equation}
The second one concerns the non-symmetric operator $(1-u^2)\ud/\ud u$.
\begin{equation}
\label{u2dEn}
(1-u^2)\frac{\ud}{\ud u}\, \mathcal{E}^{\lambda}_{n}(u)= \frac{\sqrt{n+\lambda}}{2}\,\left[\sqrt{\frac{n(n+2\lambda-1)}{(n-1+\lambda)}}\, \mathcal{E}^{\lambda}_{n-1}(u) - \sqrt{\frac{(n+1)(n+2\lambda)}{(n+1+\lambda)}}\, \mathcal{E}^{\lambda}_{n+1}(u)\right]\,.
\end{equation}

\subsection{Riesz bases and $\mathcal{D}$ quasi-bases}

The following notions related to biorthogonal sets have been mentioned along the paper. We give the main definitions here for readers' convenience.

\begin{defn}\label{def01}
A collection of vectors $\F_\varphi=\{\varphi_n, \,n\geq0\}$ in $\mathcal{H}$ is a Riesz basis for $\mathcal{H}$ if it is the image of an orthonormal basis for $\mathcal{H}$ under an invertible linear transformation. In other words, $\F_\varphi$ is a Riesz basis if there is an orthonormal basis $\{e_n\}$ for $\mathcal{H}$ and an invertible transformation $S$ such that $Se_n = \varphi_n$ for all $n$.
\end{defn}
In this case the set $\F_\psi=\{\psi_n=(S^{-1})^\dagger e_n,\, n\geq0\}$ is an Riesz basis as well, and it is biorthogonal to $\F_\varphi$: $\left<\varphi_n,\psi_m\right>=\delta_{n,m}$.

It is known that biorthogonal Riesz bases produce a resolution of the identity in $\Hil$. For physical reasons, \cite{baginbagbook}, it is sometimes convenient to consider the following weaker version of this resolution. This is what we get when dealing with $\D$-quasi bases, $\D$ being a dense subset of $\Hil$:

\begin{defn}\label{def02}
Two biorthogonal sets $\F_\eta=\{\eta_n\in\D,\,n\geq0\}$
and $\F_\Phi=\{\Phi_n\in\D,\,g\geq0\}$ are called {\em $\D$-quasi bases} if, for all $f, g\in \D$, the following holds: \be
\left<f,g\right>=\sum_{n\geq0}\left<f,\eta_n\right>\left<\Phi_n,g\right>=\sum_{n\geq0}\left<f,\Phi_n\right>\left<\eta_n,g\right>. \label{FB35}
\en
\end{defn}
Of course, when (\ref{FB35}) is satisfied, a weak resolution of the identity can be considered in $\D$.


\begin{thebibliography}{99}

\bibitem{carinena14} J. F. Cari\~nena, A. Ibort, G. Marmo, and G. Morandi,
\textit{Geometry from Dynamics, Classical and Quantum}, Springer, Dordrecht, Heildelberg, New York, London, 2015

\bibitem{ben1} C. M. Bender, S. Boettcher, {\em Real Spectra in Non-. Hermitian Hamiltonians Having PT-Symmetry}, Phys. Rev. Lett.,
{\bf 80}, 5243-5246, (1998)

\bibitem{curado} E. M. F. Curado, M. A. Rego-Monteiro, {\em Thermodynamic properties of a solid exhibiting the energy spectrum given by the logistic map},  Phys. Rev. E {\bf 61}, 6255-6260 (2000).

\bibitem{curado1} E. M. F. Curado, M. A. Rego-Monteiro, {\em Multi-parametric deformed Heisenberg algebras: a route to complexity},  J. Phys. A {\bf 34}, 3253-3264 (2001)

\bibitem{curado2} E. M. F. Curado, Y. Hassouni, M. A. Rego-Monteiro, Ligia M.C.S. Rodrigues, {\em Generalized Heisenberg algebra and algebraic method: The example of an infinite square-well potential},  Physics Letters A,  {\bf 372},  3350-3355 (2008)


\bibitem{baginbagbook} F. Bagarello, {\em Deformed canonical (anti-)commutation relations and non hermitian hamiltonians}, in {Non-selfadjoint operators in quantum physics: Mathematical aspects}, F. Bagarello, J. P. Gazeau, F. H. Szafraniec and M. Znojil Eds., John Wiley and Sons Eds.  (2015)

\bibitem{fbnlpbs} F. Bagarello, {\em Non linear pseudo-bosons}, J. Math. Phys.,  {\bf 52}, 063521, (2011)


\bibitem{rs} M. Reed and B. Simon, {\em Methods of Modern Mathematical Physics I: Functional analysis}, Academic Press, New York, (1980)

\bibitem{intop} Kuru S., Tegmen A., Vercin A., {\em Intertwined isospectral potentials in an arbitrary dimension},
J. Math. Phys, {\bf 42}, No. 8, 3344-3360, (2001); Kuru S.,
Demircioglu B., Onder M., Vercin A., {\em Two families of
superintegrable and isospectral potentials in two dimensions}, J.
Math. Phys, {\bf 43}, No. 5, 2133-2150, (2002); Samani K. A., Zarei
M., {\em Intertwined hamiltonians in two-dimensional curved spaces},
Ann. of Phys., {\bf 316}, 466-482, (2005); N. Aizawa, V. K. Dobrev, {\em Intertwining Operator Realization of Non-Relativistic Holography}, Nucl. Phys. B {\bf 828}, 581-593 (2010); B. Midya, B. Roy, R. Roychoudhury, {\em Position Dependent Mass Schroedinger Equation and Isospectral Potentials : Intertwining Operator approach}, J.  Math. Phys., {\bf 51}, 022109 (2010); A. L. Lisok, A. V. Shapovalov, A. Yu. Trifonov, {\em Symmetry and Intertwining Operators for the Nonlocal Gross-Pitaevskii Equation}, SIGMA {\bf 9}, 066, 21 pages (2013)


\bibitem{bagint1} F. Bagarello {\em Quons, coherent states and intertwining operators}, Phys. Lett.  A, {\bf 373}, 2637-2642 (2009)



\bibitem{bagint2} F. Bagarello, {\em Intertwining operators for non self-adjoint Hamiltonians and bicoherent states},  J. Math. Phys., {\bf 57}, 103501 (2016); doi: 10.1063/1.4964128

\bibitem{bagcomplex} F. Bagarello, {\em Non self-adjoint Hamiltonians with complex eigenvalues},  J. Phys. A, {\bf 49}, 215304 (2016)



\bibitem{antoine_etal_01} J.-P. Antoine,
J.-P. Gazeau, J. R. Klauder,  P. Monceau, and K. A. Penson,
Temporally stable coherent states for infinite well and P\"{o}schl–Teller potentials, J.  Math. Phys. \textbf{42} 2349 (2001).

\bibitem{magnus66}   Wilhelm Magnus, Fritz Oberhettinger, and Raj~Pal  Soni.
\newblock {\em Formulas and Theorems for
the Special Functions of Mathematical Physics}.
 \newblock Springer-Verlag,  Berlin, Heidelberg and New York, 1966.


























%
%
%







\bibitem{bergasiyou10} H. Bergeron, J.-P. Gazeau, P. Siegl, and A. Youssef, Semi-classical behavior of Po ̈schl-Teller coherent states, EPL, \textbf{92} (2010) 60003

\bibitem{bersiyou12} H. Bergeron, P. Siegl, and A. Youssef, New SUSYQM coherent states for P\"{o}schl–Teller potentials: a detailed mathematical analysis,
J. Phys. A: Math. Theor. \textbf{45} (2012) 244028 (14pp)

\bibitem{RMECLRpra2017} M. A. Rego-Monteiro, E. M. F. Curado and Ligia M. C. S. Rodrigues, {\em Time evolution of linear and generalized Heisenberg algebra nonlinear P\"oschl-Teller coherent states}, Phys. Rev. A \textbf{96}, 0521221-0521229 (2017)
    
    
    
    
\bibitem{bagquons2}  F. Bagarello, {\em Deformed quons and bi-coherent states},   Proc. Roy. Soc. A, {\bf 473}, 20170049 (2017)

\bibitem{quons} A. Ballesteros, O. Civitarese, M. Reboiro, {\em Nonstandard $q$-deformed realizations of the harmonic oscillator},
Phys. Rev  C, {\bf 72}, 014305 (2005); A. Ballesteros, O. Civitarese, M. Reboiro, {\em Correspondence between the $q$-deformed harmonic oscillator and finite range potentials},
Phys. Rev  C, {\bf 68}, 044307 (2003)

\end{thebibliography}
\end{document}